\theoremstyle{plain}
\newtheorem{thm}{Theorem}[section]
\newtheorem{remark}[thm]{Remark}
\title{Parallelizing MCMC with Machine Learning Classifier and Its Criterion Based on Kullback–Leibler Divergence}
\author{
Tomoki Matsumoto \thanks{Corresponding author: T.Matsumoto, Dr. of Engineering,
   Assistant Professor (Specially Appointed), School of Economics, University of Toyama,
    3190 Gofuku, Toyama City, Toyama, 930-8555, Japan,
    E-mail: t.matsumoto514@gmail.com, mtomoki@eco.u-toyama.ac.jp, 
    16-digit ORCID ID: 0000-0002-5680-7431} 
}
\begin{document}
\bibliographystyle{apalike}

\maketitle

\newpage
\begin{abstract}
In the era of Big Data, Markov chain Monte Carlo (MCMC) methods, which are currently essential for Bayesian estimation, face significant computational challenges owing to their sequential nature.
To achieve a faster and more effective parallel computation, we emphasize the critical role of the overlapped area of the posterior distributions based on partitioned data, which we term the reconstructable area. 
We propose a method that utilizes machine learning classifiers to effectively identify and extract MCMC draws obtained by parallel computations from the area based on posteriors based on partitioned sub-datasets, approximating the target posterior distribution based on the full dataset. 
This study also develops a Kullback-Leibler (KL) divergence-based criterion. 
It does not require calculating the full-posterior density and can be calculated using only information from the sub-posterior densities, which are generally obtained after implementing MCMC. 
This simplifies the hyperparameter tuning in training classifiers. 
The simulation studies validated the efficacy of the proposed method. 
This approach contributes to ongoing research on parallelizing MCMC methods and may offer insights for future developments in Bayesian computation for large-scale data analyses.

% In the era of Big Data, Markov chain Monte Carlo (MCMC) methods in Bayesian estimation face significant computational challenges due to their sequential nature. 
% To achieve a faster and more effective parallel computation, we propose a novel approach that focuses on the overlapped area of posterior distributions based on partitioned data. 
% Our method employs machine learning classifiers, particularly random forest, to identify and extract MCMC draws from the area and develops a new criterion based on Kullback-Leibler (KL) divergence for classifier selection that can be calculated using only sub-posterior density information. 
% Simulation studies demonstrated that the proposed KL-based criterion correlates with the true KL divergence between the full and approximated posterior distributions, validating its effectiveness in classifier selection. 
% Furthermore, the proposed method reconstructed a multimodal posterior distribution.

\vspace{3mm}
\noindent \textbf{keywords:} {Big data; Bayesian computation; Parallelizing MCMC; Machine learning classifier}
\end{abstract}

\newpage

\section{Introduction}

Advancements in information technology have enabled the collection of high-dimensional and voluminous {\lq\lq Big Data\rq\rq} across various fields, such as finance, healthcare, and social media. 
Bayesian statistics, which is known for handling uncertainty and flexible model building, is suitable for these complex data structures. 
However, Markov chain Monte Carlo (MCMC) methods, often used in Bayesian estimation and crucial for generating outcomes or samples from a distribution or a random variable, face significant challenges.
Owing to their inherently sequential nature, MCMCs often result in prolonged computational time when applied to large-volume and high-dimensional datasets.
Several strategies for a faster MCMC have been proposed to address this problem through parallel computing. 
However, naive parallelization algorithms, including stochastic gradient MCMC methods, incur high communication costs between machines and undermine efficiency \citep{Scott2013, Nishihara2014, Dunson2023}.

To overcome these machine-communication issues, some methods have demonstrated effective parallelization with minimal communication under the independent product equation (IPE) condition, as detailed in Section \ref{IPE}.
Their approach allows a dataset to be divided into sub-datasets and processed independently across different processors or machines. 
They implemented MCMC independently in each processor, without communication, until the final step, in which multiple results are combined.
This facilitates {\lq\lq one-shot learning\rq\rq} as referred to by \cite{Dunson2023}. 
Thus, the main interest is how multiple results should be combined to approximate draws from the posterior distribution based on the full dataset to achieve one-shot parallelizing MCMC.

Several methods have been developed under IPE to achieve one-shot parallelizing MCMC. 
\cite{Scott2013} introduced the Consensus Monte Carlo (CMC), combining sub-posterior draws by weighted averaging, assuming asymptotic posterior normality based on the Bernstein–von Mises theorem (\cite{Van1998}). 
\cite{Neiswanger2014} proposed an embarrassingly parallel MCMC method using kernel density estimation for posterior approximation. 
\cite{Dunson2014} introduced the Weierstrass rejection sampler, employing the Weierstrass transform to approximate sub-posterior distributions and the Gibbs sampler to combine multiple results for sampling from the approximated full-posterior distribution. 
The latter two approaches require careful bandwidth selection of the kernel. 
Although \cite{Nemeth2018} generalized CMC by modeling the log sub-posterior density with a Gaussian process, their method necessitates ad hoc adjustments for the mean function and kernel for the covariance matrix.
These methods generally suffer from hyperparameter tuning of the algorithms.

In this study, we propose a parallelizing MCMC method using classification models from the machine learning field. 
We emphasize the importance of the overlapped area of sub-posteriors, termed the reconstructable area in this study, for combining the results from each MCMC.
To specify the reconstructable area, our method utilizes machine learning classifiers, especially random forest (\cite{Breiman2001}), which is known for their effectiveness in handling high-dimensional data, to combine MCMC draws from posterior distributions based on sub-datasets\footnote{In the following, we refer this as sub-posterior draws.} to offer an alternative approach for parallelizing MCMC within a one-shot learning framework. 
In general, machine learning classifiers require hyperparameter tuning to improve their performance.
This paper also proposes a criterion based on the Kullback-Leibler (KL) divergence to select machine learning classifiers for the proposed parallelizing MCMC method.
The criterion can be calculated using only information on the posterior distributions after data partitioning and helps reduce the effort required for hyperparameter tuning.
Two simulations were conducted to validate the proposed method. The first simulation in a multivariate posterior case indicated that our KL divergence-based criterion correlated well with the value of KL divergence between the true posterior and approximated posterior distributions.
The second simulation also suggests that the proposed method can combine sub-posteriors well, even if the posterior distribution has multiple modes.

The remainder of this paper is organized as follows. 
Section \ref{IPE} explains IPE condition, which is essential for parallelizing MCMC methods with minimal communication. 
Section \ref{parallelizing MCMC} describes the proposed parallelizing MCMC method for combining posterior draws with machine learning classifiers and the KL divergence-based criterion for classifier selection. 
Section \ref{simulation} presents simulation studies that validate the proposed method. 
Finally, Section \ref{conclusion} concludes the paper with a discussion of the remaining issues of our method.

\section{Fundamental idea of parallelizing MCMC}\label{IPE}
%  under the independent product equation

For the later discussion, we explain what parallelizing MCMC means in terms of avoiding high communication costs.
By parallelizing MCMC, communication between the master node and each processor or machine can be kept minimal, and almost all procedures can be independently performed within each processor.
Figure \ref{fig:concept_of_parallelizingMCMC} shows the basic concept of minimum communication, that is, one-shot parallelizing MCMC.
In this approach, data is first divided into $m$ independent sub-datasets, MCMC draws are obtained from each sub-posterior distribution, and finally, the multiple results are combined.
Figure \ref{fig:concept_of_parallelizingMCMC} shows this concept.
%in which, to divide data into $m$ mutually independent sub-datasets, to sample MCMC draws from each sub-posterior distribution, and to combine the multiple results.
%like the one in the Step mentioned above is done under the assumption of $(\ref{eq:independentproductequation})$ below.

\begin{figure}
    \centering
    \includegraphics[width=0.6\linewidth]{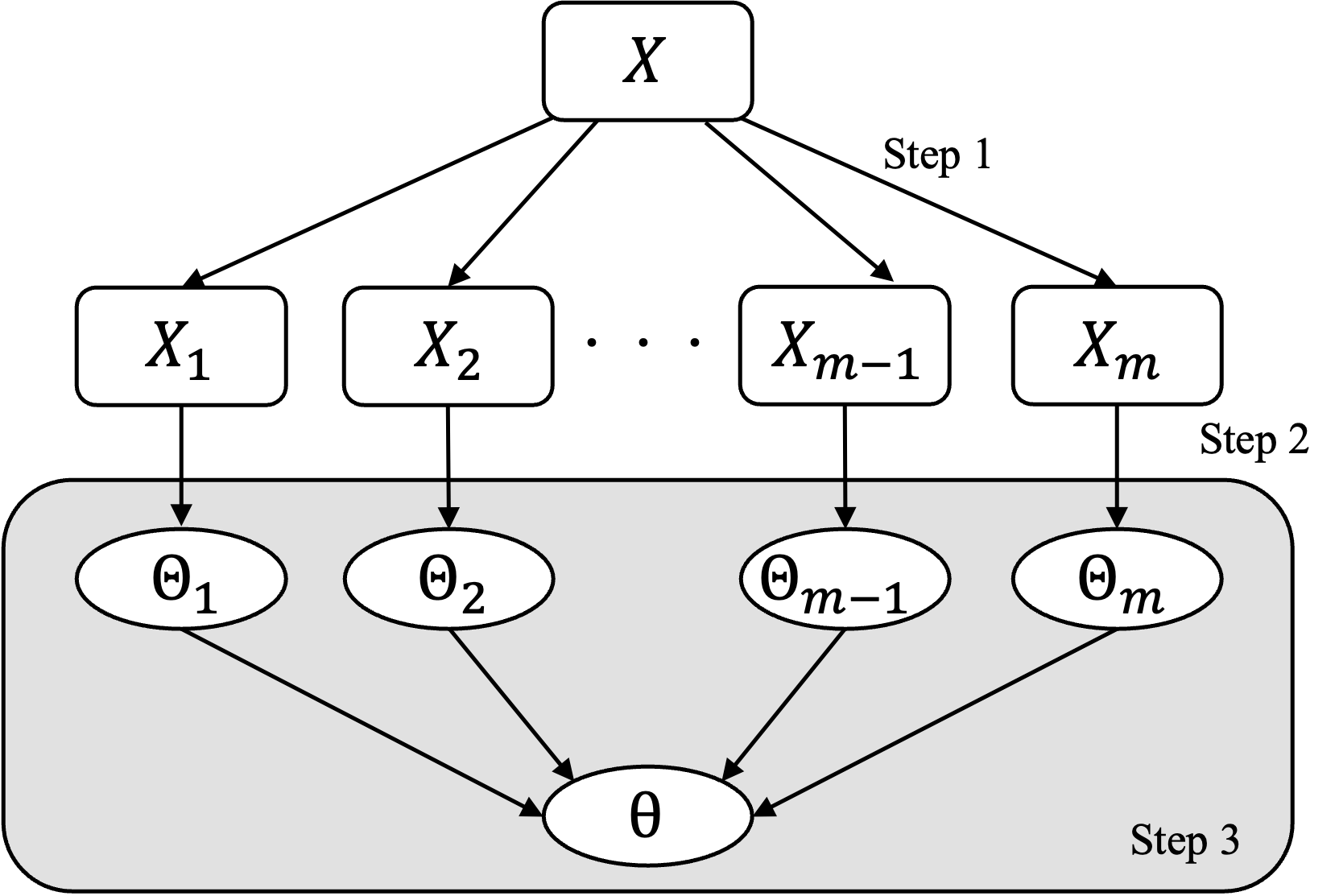}
    \caption{Concept of one-shot parallelizing MCMC. In Step 1, full-dataset is divided into $m$ mutually independent sub-datasets. In Step 2, sub-posterior draws are sampled in each processor. In Step 3, the multiple results are combined to approximate posterior draws based on full-dataset.}
    \label{fig:concept_of_parallelizingMCMC}
\end{figure}

We define new notations for a parametric model $p(X|\theta)$.
Let $X$ be $n \times b$ data matrix or full-dataset and $\theta$ be $d$-dimensional parameter vector for the parametric model.%, and the likelihood of $\theta$ from the parametric model be $l(X|\theta)$.
The posterior distribution $\pi(\theta|X)$ for the full-dataset $X$ with a prior distribution $\pi(\theta)$ can be expressed by Bayes' theorem as follows:
\begin{align}
\pi(\theta|X)= \frac{p(X|\theta)\pi(\theta)}{\int p(X|\theta)\pi(\theta) d\theta} \propto p(X|\theta)\pi(\theta). \label{eq:bayes_theorem}
\end{align}
The analyst chose the statistical model $p(X|\theta)$ and prior distribution $\pi(\theta)$ to analyze the full-dataset $X$.

To implement parallelizing MCMC, we divide $X$ into $m$ row-wise sub-datasets $X_i, \ i=1,\ldots,m$, where $X_i$ is $n_i \times b$ sub-data matrix ($\sum_{i=1}^{m}n_i=n$) and they are distributed to processors or machines.
We assume $m$ sub-datasets $X_{i}$, $i=1,\ldots,m,$ are independent conditional on parameter $\theta$, that is, $p(X|\theta)=\prod_{i=1}^{m}p(X_{i}|\theta)$.\footnote{This assumption can be satisfied, for example, by randomly sampling rows from $X$ without replacement to create each $X_{i}$.} 
%\footnote{We can also assume, more strongly, that rows of $X$ are independent.} (conditional on parameter $\theta$). 
For the model based on the sub-dataset $X_{i}$, $i=1,\ldots,m$, we define $\pi_i(\theta)$ as the prior distribution and $\pi(\theta|X_i) \propto p(X_{i}|\theta)\pi_{i}(\theta)$ as the posterior distribution.
The posterior distribution (\ref{eq:bayes_theorem}) can be calculated as follows:

\begin{align}
\pi(\theta|X) &\propto p(X|\theta)\pi(\theta) = \left\{ \prod_{i=1}^{m}p(X_i|\theta)\right\} \pi(\theta) = \left\{\prod_{i=1}^{m}\frac{p(X_{i}|\theta)\pi_{i}(\theta)}{\pi_i(\theta)}\right\} \pi(\theta) \nonumber \\
&\propto \left\{\prod_{i=1}^{m}\pi(\theta|X_i)\right\}\left\{\frac{\pi(\theta)}{\prod_{i=1}^{m}\pi_{i}(\theta)}\right\}.
\label{eq:posteriordistribution}
\end{align}

\noindent Furthermore, when we require $\pi(\theta) \propto \prod_{i=1}^{m}\pi_{i}(\theta)$, equation $(\ref{eq:posteriordistribution})$ above becomes
\begin{equation}
\pi(\theta|X)\propto \prod_{i=1}^{m}\pi(\theta|X_i).
\label{eq:independentproductequation}
\end{equation}
To satisfy the requirement for prior distribution and preserve the total amount of prior information, a simple method is to set $\pi_{i}(\theta) \propto \pi(\theta)^{1/m}$.
Note that the prior setting for each sub-dataset is solely for implementing a parallelizing MCMC method, and the true prior chosen by the analyst remains $\pi(\theta)$

Equation (\ref{eq:independentproductequation}) means that under the independence assumption of the data partition and a little prior requirement, the posterior distribution $\pi(\theta|X)$ based on the full-dataset can be represented by the
product of sub-posterior distributions $\pi(\theta|X_{i})$, and this should make the computation of each MCMC much faster. %That is, the posterior distribution based on $X$ forms the product of the posterior distribution based on $X_i$. 
\citet{Dunson2014} calls equation $(\ref{eq:independentproductequation})$ above the independent product equation (IPE).
In this study, parallelizing MCMC is the name given to the collection of several methods combining posterior draws based on the sub-datasets to manufacture draws to approximate draws based on the full-dataset, taking advantage of structures under IPE.
After constructing sub-posterior distributions under IPE, we sample MCMC draws from each sub-posterior and combine them to manufacture or approximate draws from a full-posterior distribution.
%The following section proposes a method for it.
The following section proposes and explains how to combine sub-posterior draws to achieve parallelizing MCMC.

\section{Machine learning classifier-based method and Kullback-Leiber-based criterion} \label{parallelizing MCMC}
This section proposes a method that combines multiple sets of MCMC draws from sub-posterior distributions using a machine learning classifier. 
We explain that the area where multiple MCMC draws from $\pi(\theta|X_{i})$, $i=1\ldots,m$, overlap is the crucial support required to reconstruct the full-posterior. 
We then demonstrate how to identify such support.

\subsection{Strategy of combining posterior draws} \label{Strategy of combining posterior draws}
The Weierstrass transform $W_{h}$ is a convolution of a normal distribution with mean $0$ and bandwidth $h$ for a probability distribution:
\begin{align*}
    W_{h}g(\theta) = \int_{-\infty}^{\infty} \frac{1}{\sqrt{2\pi}h}\exp\left\{-\frac{(\theta-y)^{2}}{2h^{2}}\right\}g(y)dy.
\end{align*}
As $h \rightarrow 0$, the Weierstrass transform recovers the original distribution because the normal random variable added to the Weierstrass transform converges to its mean of zero.
By reinterpreting \cite{Dunson2014}, this subsection shows that the overlapped area of the sub-posterior distributions plays an essential role in our parallelizing MCMC method by exploring how the full-posterior distribution is recovered when the Weierstrass transform is adopted for the sub-posterior distributions and bandwidth parameters move to 0 under IPE.

Under IPE, $d$-dimensional full-posterior can be approximated via the (multivariate) Weierstrass transform $W_{h}$ with a common bandwidth $h$ used for all dimensions as follows:
%(one dimension case is (\ref{eq:approximationofIPEwithWeierstrasstransform})):
% \begin{eqnarray}
% \pi(\theta|X) &=& \prod_{i=1}^{m}\pi(\theta|X_{i}) \approx \prod_{i=1}^{m}W_{h}\pi(\theta|X_{i}) \nonumber\\
% &=&\prod_{i=1}^{m}\int\frac{1}{(2\pi)^{\frac{d}{2}}|h^{2}I|^{\frac{1}{2}}}\exp\left\{-\frac{(\theta-{r}_{i})^{\top}(\theta-{r}_{i})^{\top}}{2h^{2}}\right\}\pi({r}_{1}|X_{1})\cdots \pi({r}_{m}|X_{m})d{r}\nonumber\\
% &\propto&\int\exp\left\{-\frac{\sum_{i=1}^{m}(\theta-{r}_{i})^{\top}(\theta-{r}_{i})}{2h^{2}}\right\}\pi({r}_{1}|X_{1})\cdots \pi({r}_{m}|X_{m})d{r},
% \label{eq:ipeddimensionalcase}
% \end{eqnarray}

\begin{eqnarray}
\pi(\theta|X) &=& \prod_{i=1}^{m}\pi(\theta|X_{i}) \approx \prod_{i=1}^{m}W_{h}\pi(\theta|X_{i}) \nonumber\\
&=&\prod_{i=1}^{m}\int\frac{1}{(2\pi)^{\frac{d}{2}}|h^{2}I|^{\frac{1}{2}}}\exp\left\{-\frac{(\theta-{r}_{i})^{\top}(\theta-{r}_{i})^{\top}}{2h^{2}}\right\}\pi({r}_{i}|X_{i})d{r_{i}}\nonumber\\
&\propto&\int\exp\left\{-\frac{\sum_{i=1}^{m}(\theta-{r}_{i})^{\top}(\theta-{r}_{i})}{2h^{2}}\right\}\pi({r}_{1}|X_{1})\cdots \pi({r}_{m}|X_{m})d{r},
\label{eq:ipeddimensionalcase}
\end{eqnarray}

\noindent where the last integral represents a multiple integral with respect to $r_{1}, \ldots, r_{m}$ and $I$ is $d \times d$ identity matrix.
For the inside of the braces in (\ref{eq:ipeddimensionalcase}), we can further calculate as follows:
\begin{eqnarray*}
\sum_{i=1}^{m}(\theta-{r}_{i})^{\top}(\theta-{r}_{i})&=&\sum_{i=1}^{m}(\theta^{\top}\theta-\theta^{\top}{r}_{i}-{r}_{i}^{\top}\theta+{r}_{i}^{\top}{r}_{i})\\
% &=&m\theta^{\top}\theta-\theta^{\top}\sum_{i=1}^{m}{r}_{i}-\left(\sum_{i=1}^{m}{r}_{i}^{\top}\right)\theta+\sum_{i=1}^{m}{r}_{i}^{\top}{r}_{i}\\
&=&m\left(\theta^{\top}\theta-\theta^{\top}\cdot \frac{1}{m}\sum_{i=1}^{m}{r}_{i}-\left(\frac{1}{m}\sum_{i=1}^{m}{r}_{i}^{\top}\right)\theta+\frac{1}{m}\sum_{i=1}^{m}{r}_{i}^{\top}{r}_{i}\right).
\end{eqnarray*}

\noindent Let $\bar{R}=1/m\sum_{i=1}^{m}{r}_{i}$, and $\bar{R^{2}}=1/m\sum_{i=1}^{m}{r}_{i}^{\top}{r}_{i}$, then the expression above is rewritten as
\begin{eqnarray}
\sum_{i=1}^{m}(\theta-{r}_{i})^{\top}(\theta-{r}_{i})
%&=&m\left(\theta^{\top}\theta-\theta^{\top}\bar{R}-\bar{R}^{\top}\theta+\bar{R^{2}}\right)\nonumber\\
% &=&m\left(\theta^{\top}\theta-\theta^{\top}\bar{R}-\bar{R}^{\top}\theta+\bar{R}^{\top}\bar{R}-\bar{R}^{\top}\bar{R}+\bar{R^{2}}\right)\nonumber\\
% &=&m\left\{\left(\theta-\bar{R}\right)^{\top}\left(\theta-\bar{R}\right)-\bar{R}^{\top}\bar{R}+\bar{R^{2}}\right\}\nonumber\\
&=&m\left(\theta-\bar{R}\right)^{\top}\left(\theta-\bar{R}\right)+m\left(\bar{R^{2}}-\bar{R}^{\top}\bar{R}\right).
\label{eq:insideofbraces1}
\end{eqnarray}

\noindent Hence, with $C_{1}$ and $C_{2}$ being normalizing constants, equation (\ref{eq:ipeddimensionalcase}) along with (\ref{eq:insideofbraces1}) can be rewritten as
\begin{eqnarray}
&&W_{h}\pi(\theta|X)\nonumber\\
&=&C_{1}^{-1}\int\exp\left\{-\frac{m\left(\theta-\bar{R}\right)^{\top}\left(\theta-\bar{R}\right)+m\left(\bar{R^{2}}-\bar{R}^{\top}\bar{R}\right)}{2h^{2}}\right\}\cdot \pi({r}_{1}|X_{1})\cdots \pi({r}_{m}|X_{m})d{r}\nonumber\\
&=&C_{1}^{-1}\int\exp\left\{-\frac{m\left(\theta-\bar{R}\right)^{\top}\left(\theta-\bar{R}\right)}{2h^{2}}\right\}\cdot \exp\left\{-\frac{m\left(\bar{R^{2}}-\bar{R}^{\top}\bar{R}\right)}{2h^{2}}\right\}\cdot \pi({r}_{1}|X_{1})\cdots \pi({r}_{m}|X_{m})d{r}\nonumber\\
&=&C_{2}^{-1}\int{\cal{N}}\left(\theta\left|\bar{R},\frac{h^{2}}{m}I\right.\right)\cdot \exp\left\{-\frac{m\left(\bar{R^{2}}-\bar{R}^{\top}\bar{R}\right)}{2h^{2}}\right\}\cdot \pi({r}_{1}|X_{1})\cdots \pi({r}_{m}|X_{m})d{r}.
% &=&C_{2}^{-1}\cdot \int{\cal{N}}\left(\theta\left|\bar{R},\frac{h^{2}}{m}I\right.\right)\cdot \exp\left\{-\frac{m\left(\bar{R^{2}}-\bar{R}^{\top}\bar{R}\right)}{2h^{2}}\right\}\cdot \pi({r}_{1}|X_{1})\cdots \pi({r}_{m}|X_{m})d{r},
\label{eq:approxedposteriorviamultiweierstrass}
\end{eqnarray}

\noindent The normalizing constant $C_{2}$ can be calculated as
\begin{eqnarray}
C_{2}
%\int \int{\cal{N}}\left(\theta\left|\bar{R},\frac{h^{2}}{m}I\right.\right)\cdot \exp\left\{-\frac{m\left(\bar{R^{2}}-\bar{R}^{\top}\bar{R}\right)}{2h^{2}}\right\}\cdot \pi({r}_{1}|X_{1})\cdots \pi({r}_{m}|X_{m})d{r}d\theta\nonumber\\
%&=&\int\left[ \int{\cal{N}}\left(\theta\left|\bar{R},\frac{h^{2}}{m}I\right.\right)d\theta\right]\cdot \exp\left\{-\frac{m\left(\bar{R^{2}}-\bar{R}^{\top}\bar{R}\right)}{2h^{2}}\right\}\cdot \pi({r}_{1}|X_{1})\cdots \pi({r}_{m}|X_{m})d{r}\nonumber\\
%&=&\int \exp\left\{-\frac{m\left(\bar{R^{2}}-\bar{R}^{\top}\bar{R}\right)}{2h^{2}}\right\}\cdot \pi({r}_{1}|X_{1})\cdots \pi({r}_{m}|X_{m})d{r}\nonumber\\
&=&E^{\pi(r|X)}\left[\exp\left\{-\frac{m\left(\bar{R^{2}}-\bar{R}^{\top}\bar{R}\right)}{2h^{2}}\right\}\right],
\label{eq:expectationofxpunderfullposterior}
\end{eqnarray}
where $E^{\pi(r|X)}[\cdot]$ expresses the expectation of an expression within a bracket under the full-posterior distribution under IPE.

%From the derivation, we are able to obtain $W_{h}\pi(\theta|X)$ which converges to $\pi(\theta|X)$ pointwise as $h \to 0$.

As described above, $W_{h}\pi(\theta|X)$ converges to $\pi(\theta|X)$ pointwise as $h \to 0$ with the assumption of interchangeability between integration and taking a limit. 
However, in the braces within the exponential of (\ref{eq:expectationofxpunderfullposterior}), when $\bar{R^{2}}-\bar{R}^{\top}\bar{R}>0$, then $\exp\left\{-m(\bar{R^{2}}-\bar{R}^{\top}\bar{R})/(2h^{2})\right\} \to 0$ as $h \to 0$. 
Because the normalizing constant $C_{2}$ cannot be zero, we need to examine the area where $\bar{R^{2}}-\bar{R}^{\top}\bar{R}=0$. 
This implies that the support of integration with respect to ${r}$ in (\ref{eq:expectationofxpunderfullposterior}) can be reduced to ${\cal{E}}=\{({r}_{1},\ldots,{r}_{m})|\bar{R^{2}}-\bar{R}^{\top}\bar{R}=0\}$, or within area ${\cal{E}}$, the Weierstrass transform is valid. 
Thus, to the extent the Weierstrass transform is applicable and integration and taking limit can be interchanged, we only need to worry about the set ${\cal{E}}$ when we reconstruct the full-posterior from the posterior based on $X_{i}$, $i=1,\ldots,m$.\footnote{The same discussion applies to (\ref{eq:approxedposteriorviamultiweierstrass}).}
We call this area {\lq\lq reconstructable area,\rq\rq} illustrated in Figure \ref{fig:illustration_of_reconstructable_area}.

\begin{figure}[tbp]
 \begin{center}
  \includegraphics[width=120mm]{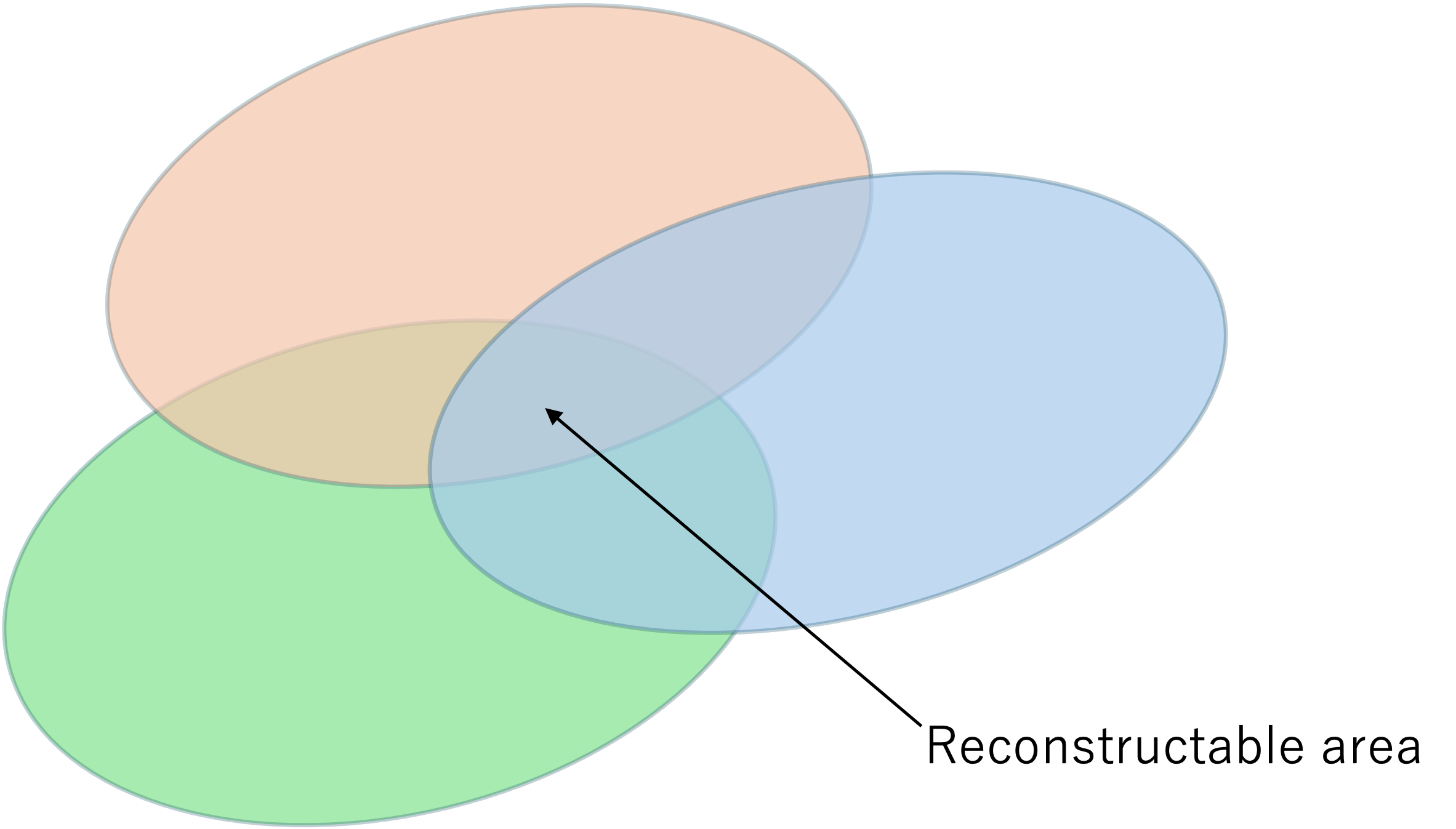}
 \end{center}
 \caption{Illustration of the reconstructable area in three data partition and two-dimensional posterior distributions.}
 \label{fig:illustration_of_reconstructable_area}
\end{figure}

% \subsection{Probabiloity of Reconstructable area on posterior draws} \label{reconstructable area}
% % As described in the previous subsection, the posterior distribution based on a full dataset $X$ can be part of what overlaps all posterior distributions based on $X_{i}$. 
% % Hence, 

Our task in parallelizing MCMC is to construct the distribution on the reconstructable area from posterior draws based on $X_{i}$.
Suppose that $N$ MCMC draws $\theta_{i}^{(t)},\ t=1,\ldots,N$, from $\pi(\theta|X_{i})$, $i=1,\ldots m$. Let
\begin{eqnarray*}
\Theta^{\top}&=&\left({\theta_{1}^{(1)}},\ldots,{\theta_{1}^{(N)}},\ldots,{\theta_{m}^{(1)}},\ldots,{\theta_{m}^{(N)}}\right)\\
&=&(\theta_{1}^{\ast},\ldots,\theta_{N}^{\ast},\ldots,\theta_{k}^{\ast},\ldots,\theta_{(m-1)N}^{\ast},\ldots,\theta_{mN}^{\ast}),
\end{eqnarray*}
hence $\Theta$ is $mN \times d$ matrix. 
%We re-index the column vector inside $mN\times d$ matrix above so that its $k$-th column vector is now indexed by $k$ so that $k=1,\ldots,N,N+1,\ldots,2N,\ldots,(m-1)N,(m-1)N+1,\ldots,mN$.
For ease of notation, we re-index the MCMC draws using a single index $k$ that runs from $1$ to $mN$, where draws from the $i$-th sub-posterior correspond to $k = (i-1)N + 1,\ldots, iN$.
This allows us to reference any draw as $\theta_{k}^{*}$ regardless of which sub-posterior it came from.

Let $z \in \{1,\ldots,m\}$ be a variable expressing which processor a sub-posterior draw is sampled from.
We aim to determine whether $\theta_{k}^{\ast}$ belongs to the reconstructable area or not: if $\theta_{k}^{\ast}$ could have belonged equally to any one of $i=1,\ldots,m$ sub-processes, then we are able to say that $\theta_{k}^{\ast}$ belongs to the reconstructable area.
In other words, our basic idea for constructing a distribution on the reconstructable area is based on the following assumption: if we can calculate that $\theta_{k}^{\ast}$ could have been generated from any one of the $m$ sub-processes $\Pr(z=i|\theta_{k}^{\ast})$, $i=1,\ldots,m$, with equal probability $1/m$, then we regard $\theta_{k}^{\ast}$ with the draw from the reconstructable area.

% Our basic idea to construct the distribution above is based on the following assumption: if we can calculate the probability that $\theta_{k}^{\ast}$ could have come from any one of $m$ processed with a probability of $1/m$, then we regard $\theta_{k}^{\ast}$ with the draw from the reconstructable area.

\subsubsection{Probability calculation by random forest}
Although there are many other candidate methods to calculate the probability $\Pr(z=i|\theta_{k}^{\ast})$, we use random forest (\cite{Breiman2001}).
Random forest is one of the ensemble learning methods of decision trees.
%Roughly speaking, a decision tree can be trained by splitting the feature values.
%Random forest also can output classification probability $p(z=i|\theta_{k}^{\ast})$.
Random forest is attractive because they are an ensemble of decision trees that can partition the parameter space effectively.
This characteristic makes them particularly well-suited for identifying the reconstructable area in which the sub-posterior distributions overlap.
%, which is essential for accurately reconstructing the full-posterior distribution.
% Furthermore, the ability of random forest to handle high-dimensional data relatively easily makes them suitable for this task. 
In Addition, its non-parametric nature allows it to adapt to the complex structure of posterior distributions, which are often multimodal.
\cite{Wang2015} used random partition trees, similar to random forest, and demonstrated the efficiencies. 
As a result, random forest is suited for calculating the probability $\Pr(z=i|\theta_{k}^{\ast})$.
%our method identifies the reconstructable area among sub-posterior distributions and preserves the nuances of these distributions, facilitating a more accurate reconstruction of the full-posterior distribution.

\subsection{Posterior approximation and KL divergence-based criterion}
We have already explained that machine learning classifiers, especially random forest, can calculate the probability of belonging to the reconstructable area.
However, in general, training a classifier model requires hyperparameter tuning.
We use random forest for the proposed parallelizing MCMC, not for simple classification tasks; thus, an optimization criterion is required.
We introduce a method for extracting MCMC draws from the reconstructable area using the proposed probability calculation and approximating the full-posterior distribution.
For efficient extraction, we also propose an optimization criterion for training a classifier: an upper bound of the Kullback-Leibler (KL) divergence between the full-posterior and approximated posterior distributions using only the probabilities from a classifier $\Pr(z_{k}=i|\theta_{k}^{\ast})$ and sub-posterior density values.
In almost all MCMC methods, such as the Metropolis-Hastings and Hamiltonian Monte Carlo methods (\cite{BDA}), sub-posterior density values are already obtained after each MCMC is completed.
Consequently, our proposed parallelizing MCMC method reduces the effort required for hyperparameter tuning.

Here, we explain how to approximate the full-posterior distribution.
We assume $Q$ and $P$ are categorical distributions with a category size $m$, respectively. 
For any category $j \in {1,\ldots,m}$, we denote by $Q(z=j)$ and $P(z=j)$ their respective probabilities evaluated in category $j$.
%Let us define $q(z=i) = p(z=i|\theta),\ i=1,\ldots,m$, as probability searching the probability of the {\lq\lq reconstructable area.\rq\rq}
From the information geometry and theory of large deviation (\cite{Amari_undated-bp}), the probability that $P$ is equal to $Q$ is given by
\begin{align*}
    \Pr{(P = Q)} = \exp{(-\text{KL}(Q||P))},
\end{align*}
where $\text{KL}(Q||P))$ is the Kullback–Leibler divergence between $Q$ and $P$.
Now, for $j = 1,\ldots,m$, we define $Q(z=j) = 1/m$, which is the baseline probability distribution for searching the reconstructable area discussed in Subsection \ref{Strategy of combining posterior draws}.
We also define the categorical distribution $P_{i}^{(t)}$ with category size $m$ and probability $P_{i}^{(t)}(z = j) = \Pr{\left(z=j|\theta_{i}^{(t)}\right)}$.
Then, the probability above given a posterior draw $\theta_{i}^{(t)}$, $t=1,\ldots,T$, is
\begin{align*}
    \Pr{\left(P_{i}^{(t)}=Q\right)} = \exp{\left(\log{m} + \frac{1}{m}\sum_{j=1}^{m}\log{\Pr{\left(z=j|\theta_{i}^{(t)}\right)}}\right)}.
\end{align*}
This probability can be interpreted as follows: the closer $\Pr{\left(P_{i}^{(t)}=Q\right)}$ is to 1, the more probable $\theta_{i}^{(t)}$ is a sample from the reconstructable area, and vice versa because $P_{i}^{(t)}(z=j)$ is close to $1/m$ in that case.
We assume that the higher the probability $\Pr{\left(P_{i}^{(t)}=Q\right)}$ is, the more frequently $\theta_{i}^{(t)}$ should be resampled.
By normalizing the probability for $i=1,\ldots,m$ and $t=1,\ldots,N$, we obtain a discrete distribution with the probability of $\theta_{i}^{(t)}$ occurring as follows:
\begin{align}
    f\left(\theta = \theta_{i}^{(t)}\right) = \frac{\Pr{\left(P_{i}^{(t)}=Q\right)}}{\sum_{i^{'}}\sum_{t^{'}}\Pr{\left(P_{i^{'}}^{(t^{'})}=Q\right)}} = \frac{\Pr{\left(P_{i}^{(t)}=Q\right)}}{C_{f}},
    \label{eq:approx_posterior}
\end{align}
where $C_{f} = \sum_{i^{'}}\sum_{t^{'}}\Pr{\left(P_{i^{'}}^{(t^{'})}=Q\right)}$ is the normalizing constant.
We assume that the distribution approximates the full-posterior distribution $\pi(\theta|X)$.

When we implement MCMC methods in parallel, we generally calculate and obtain the value of the density $\pi(\theta_{i}^{(t)}|X_{i})$, $i=1,\ldots,m$ and $t=1,\ldots, T$, thus we can use them for other calculations after implementing each MCMC.
Let $f_{\pi}$ be the discrete full-posterior distribution:
\begin{align*}
    f_{\pi}\left(\theta_{i}^{(t)}\right) = \frac{\pi\left(\theta_{i}^{(t)} | X\right)}{C_{\pi}},
\end{align*}
where $C_{\pi} = \sum_{i^{'}}\sum_{t^{'}} \pi\left(\theta_{i^{'}}^{(t^{'})} | X\right)$.
Let also $C_{\pi_{\text{sub}}} = \sum_{i^{'}}\sum_{t^{'}} \pi\left(\theta_{i^{'}}^{(t^{'})} | X_{i^{'}}\right)$.
Using these values, we can evaluate the Kullback-Leibler divergence between $f_{\pi}(\theta)$ and $f(\theta)$.

% \begin{lem} \label{lemma}
% Let $\theta_{i}^{(t)}$ follows $\pi(\theta|X_{i})$.
% A constant $H$, which is unrelated to a classifier model $\Pr(z|\theta)$, exists and the following inequality holds:
% {\rm
% \begin{align}
%     H \sum_{i}\sum_{t} \frac{\pi\left(\theta_{i}^{(t)}|X_{i}\right)}{C_{\pi_{\text{sub}}}} \log{f\left(\theta_{i}^{(t)}\right)} \le \sum_{i}\sum_{t} \frac{\pi\left(\theta_{i}^{(t)}|X\right)}{C_{\pi}}\log{f\left(\theta_{i}^{(t)}\right)},
% \end{align}
% where $C_{\pi_{\text{sub}}} = \sum_{i^{'}}\sum_{t^{'}}\pi\left(\theta_{i^{'}}^{(t^{'})}|X_{i^{'}}\right)$ and $C_{\pi} = \sum_{i^{'}}\sum_{t^{'}}\pi\left(\theta_{i^{'}}^{(t^{'})}|X\right)$.
% }
% \end{lem}

% \iffalse
% The assumption above is reasonable because $\pi(\theta|X_{i})$ is sampling distribution, therefore the value of $\pi\left(\theta_{i}^{t}|X_{j}\right),\ j \neq i$, will be small.
% We can also assume that there is a constant such as $\prod_{j \ne i}\pi\left(\theta_{i}^{t}|X_{j}\right) \le H^{'}$ and replace $H$ as $H^{'}H$.

% By accepting this assumption, 
% \fi

%We obtain the upper bound of $\text{KL}(p:f)$.

% We immediately obtain the following Theorem by using Lemma \ref{lemma}.

\begin{thm} \label{upperbound of KL}
%Assuming the assumptions in Theorem 4.1, 
{\rm 
% Let $\theta_{i}^{(t)}$ follows $\pi(\theta|X_{i})$.
A constant $H > 0$, which is unrelated to classifier model $\Pr(z|\theta)$, exists and $\text{KL}(f_{\pi}||f)$ is bounded as
}
{\rm
\begin{align*}
    \frac{1}{H}{\text{KL}}(f_{\pi}||f) \le \log{\frac{C_{f}}{m}} - \frac{1}{m} \sum_{i}\sum_{t}\frac{\pi\left(\theta_{i}^{(t)}|X_{i}\right)}{C_{\pi_{\text{sub}}}} \sum_{j=1}^{m}\log{\Pr{\left(z=j|\theta_{i}^{(t)}\right)}}.
\end{align*}
}
\end{thm}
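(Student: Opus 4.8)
The plan is to turn $\mathrm{KL}(f_\pi\|f)$ into an exact identity consisting of a classifier-independent entropy term plus a cross-entropy term, to discard the entropy term because it is non-negative, and finally to trade the full-posterior weights $f_\pi$ for the sub-posterior weights $\pi(\theta_i^{(t)}|X_i)/C_{\pi_{\text{sub}}}$ at the cost of a classifier-free multiplicative constant $H$.

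First I would substitute the definitions $f_\pi(\theta_i^{(t)})=\pi(\theta_i^{(t)}|X)/C_\pi$ and $f(\theta_i^{(t)})=\Pr(P_i^{(t)}=Q)/C_f$ into $\mathrm{KL}(f_\pi\|f)=\sum_{i,t} f_\pi(\theta_i^{(t)})\log\{f_\pi(\theta_i^{(t)})/f(\theta_i^{(t)})\}$, and insert the closed form $\log\Pr(P_i^{(t)}=Q)=\log m+\tfrac1m\sum_{j}\log\Pr(z=j|\theta_i^{(t)})$ established just above the statement. Using $\sum_{i,t}f_\pi(\theta_i^{(t)})=1$, the terms $\log C_f$ and $\log m$ and the classifier sum separate out, while the leftover $\sum_{i,t} f_\pi\log f_\pi$ is precisely minus the Shannon entropy $\mathrm{Ent}(f_\pi)\ge 0$, which does not involve the classifier. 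This produces the exact identity
\begin{equation*}
\mathrm{KL}(f_\pi\|f)=\log\tfrac{C_f}{m}-\tfrac1m\sum_{i,t}f_\pi(\theta_i^{(t)})\sum_{j=1}^m\log\Pr(z=j|\theta_i^{(t)})-\mathrm{Ent}(f_\pi).
\end{equation*}

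Next I would drop $-\mathrm{Ent}(f_\pi)\le 0$ to get an upper bound of the same shape but still weighted by the full-posterior masses. The substantive step is to replace those masses by the sub-posterior masses that actually appear in the claim. Here I would invoke IPE, $\pi(\theta|X)\propto\prod_{\ell=1}^m\pi(\theta|X_\ell)$, so that the ratio $\lambda_i^{(t)}:=f_\pi(\theta_i^{(t)})\big/\{\pi(\theta_i^{(t)}|X_i)/C_{\pi_{\text{sub}}}\}$ is proportional to $\prod_{\ell\neq i}\pi(\theta_i^{(t)}|X_\ell)$ and hence is entirely free of the classifier. Setting $H:=\max_{i,t}\lambda_i^{(t)}$, which is finite once the sub-posterior densities are bounded on the finite collection of draws, and using that each inner sum $-\sum_{j}\log\Pr(z=j|\theta_i^{(t)})$ is non-negative, the full-posterior-weighted sum is dominated term by term by $H$ times the sub-posterior-weighted sum.

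I expect the main obstacle to be the additive term $\log(C_f/m)$: it is common to both sides but is itself classifier-dependent, and the term-by-term domination only controls the weighted classifier sum. Passing from an estimate of the form ``$\le\log(C_f/m)+H\cdot(\text{sub-weighted sum})$'' to the desired ``$\le H\cdot\{\log(C_f/m)+(\text{sub-weighted sum})\}$'' needs $\log(C_f/m)\ge 0$, i.e.\ $C_f\ge m$, which holds in the relevant regime where the classifier assigns near-uniform probabilities on the reconstructable area. I would therefore either argue inside this regime or, more robustly, note that the right-hand side itself equals $\mathrm{KL}(\mu\|f)+\mathrm{Ent}(\mu)\ge 0$, where $\mu$ denotes the sub-posterior weights, so that it is guaranteed non-negative and $H$ may be enlarged to absorb the sign of $\log(C_f/m)$; confirming that this enlargement remains independent of $\Pr(z|\theta)$ is the delicate point I would verify last.
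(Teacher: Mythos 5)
Your overall strategy---drop the entropy term, then trade full-posterior weights for sub-posterior weights via a classifier-free constant guaranteed by IPE and the finiteness of the draw set---is the same as the paper's, and your exact identity for $\mathrm{KL}(f_{\pi}\|f)$ is correct. The gap is exactly the one you flag at the end, and neither of your proposed repairs closes it. Splitting off $\log(C_{f}/m)$ \emph{before} trading weights leaves you with $\mathrm{KL}(f_{\pi}\|f)\le \log(C_{f}/m)+H A_{\mu}$, where $A_{\mu}=-\frac{1}{m}\sum_{i,t}\mu_{i,t}\sum_{j}\log\Pr(z=j|\theta_{i}^{(t)})$ and $\mu_{i,t}=\pi(\theta_{i}^{(t)}|X_{i})/C_{\pi_{\text{sub}}}$; to reach the claimed $H\{\log(C_{f}/m)+A_{\mu}\}$ you need $(H-1)\log(C_{f}/m)\ge 0$, which fails whenever $C_{f}<m$. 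That regime is not exotic: it occurs precisely for confident classifiers (for instance an overfit forest that assigns each training draw a high probability of its own label), and these are exactly the classifiers the criterion must be able to penalize during hyperparameter search, so restricting attention to $C_{f}\ge m$ defeats the purpose. The ``enlarge $H$'' repair also fails: for a classifier with $\Pr(z=i|\theta_{i}^{(t)})\approx 1-(m-1)\epsilon$ at every draw, one gets $\log(C_{f}/m)\approx\log(mN)-\frac{m-1}{m}|\log\epsilon|\to-\infty$ while $\log(C_{f}/m)+A_{\mu}$ stays bounded near $\log(mN)$, so the ratio $\{\log(C_{f}/m)+HA_{\mu}\}/\{\log(C_{f}/m)+A_{\mu}\}$ is unbounded over classifiers and no classifier-free enlargement of $H$ can convert your intermediate bound into the theorem.

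The fix is simply to reverse the order of your last two steps, which is what the paper does. From IPE and finiteness you have the pointwise domination $f_{\pi}(\theta_{i}^{(t)})\le H\mu_{i,t}$ (the paper chooses $H$ so that $\max_{i,t}f_{\pi}\le H\min_{i,t}\mu_{i,t}$; your pointwise-ratio maximum is a tighter, equally valid, classifier-free choice). Since every summand $-\log f(\theta_{i}^{(t)})$ is non-negative, this domination applies to the whole cross-entropy at once,
\begin{equation*}
-\sum_{i,t}f_{\pi}\bigl(\theta_{i}^{(t)}\bigr)\log f\bigl(\theta_{i}^{(t)}\bigr)\;\le\;-H\sum_{i,t}\mu_{i,t}\log f\bigl(\theta_{i}^{(t)}\bigr),
\end{equation*}
and only \emph{then} should you substitute $\log f(\theta_{i}^{(t)})=\log m-\log C_{f}+\frac{1}{m}\sum_{j}\log\Pr(z=j|\theta_{i}^{(t)})$. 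Because $\sum_{i,t}\mu_{i,t}=1$, the term $\log(C_{f}/m)$ then emerges already multiplied by $H$, and no sign condition on it is needed. With this reordering your argument is complete and coincides with the paper's proof.
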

\begin{proof}
    See Appendix \ref{Proof of Theorem}.
\end{proof}

\begin{remark}
{\rm
The constant $H$ is not related to the model $\Pr(z|\theta_{k}^{\ast})$, therefore in the comparison of different approximated posteriors $f$ and $f^{'}$, we can just use the right-hand side of Theorem \ref{upperbound of KL}.
Hereafter, we refer to the right-hand side of Theorem \ref{upperbound of KL}, that is, the upper bound of KL.
}
\end{remark}

% \begin{remark}
% {\rm
% The upper bound of KL in Theorem \ref{upperbound of KL} requires sub-posterior density values but needs no calculation of full-posterior density values.
% Generally, we have already obtained sub-posterior density values after implementing MCMC for each batch of data, for example, using Metropolis-Hastings or Hamiltonian Monte Carlo (\cite{BDA}).
% Thus, we can use the upper bound of KL without additional calculation of posterior density values.
% }
% \end{remark}

As a summary of this section, we propose a parallelizing MCMC algorithm.
Algorithm 1 shows the details.

\begin{algorithm} \label{palalgorithm}
\caption{Parallelizing MCMC algorithm based on machine learning classifier with the upper bound of KL}
\begin{algorithmic}[1] % The number [1] indicates the number of each line
\Procedure{MachineLearningParallelizingMCMC}{$X$, params} % Procedure name and parameters
    \State Divide data $X$ into sub-data $X_{i}$, $i=1,2,\ldots,m$
    \For{$i = 1$ to $m$}
        \State $\theta_{i}$, $\pi(\theta|X_{i}) \leftarrow$ \Call{MCMC}{$X_{i}$}
        \State Store $i$, $\theta_{i}$, $\pi(\theta|X_{i})$
    \EndFor
    \State Initialize collection arrays $z = [~]$, $\Theta = [~]$, $p = [~]$
    \For{$i = 1$ to $m$}
        \State Append $i$ to $z$
        \State Append $\theta_{i}$ to $\Theta$
        \State Append $\pi(\theta|X_{i})$ to $p$
    \EndFor
    \State Initialize collection arrays $\Pr = [~]$, $\text{UKL} = [~]$
    \For{$j = 1$ to $|\text{params}|$}
        \State Append \Call{Classifier}{$\text{params}[j]$, $z$, $\Theta$} to $\Pr$
        \State Append \Call{UpperBoundKL}{$p$, $\Pr[j]$} to $\text{UKL}$
    \EndFor
    \State $k \leftarrow \arg\hspace{-0.5mm}\min \text{UKL}$
    \State Construct approximated posterior with (\ref{eq:approx_posterior}), $\Pr[k]$, and $\theta_{i}$'s.
\EndProcedure
\end{algorithmic}
\end{algorithm}

% \begin{prop}
%     Let the upper bound of $\text{KL}(\pi:f)$ be $\widetilde{\text{KL}}(\pi:f)$.
%     Assuming $\Pr{\left(z=j|\theta_{i}^{(t)}\right)},\ j=1,\ldots,m$ are calculated through softmax.
%     Then, 
%     \begin{align*}
%         \frac{1}{H}\widetilde{\text{KL}}(\pi:f) \le \sum_{i}\sum_{t}\frac{\pi(\theta_{i}^{(t)}|X_{i})}{C_{\pi_{\text{sub}}}} \log{\frac{\pi(\theta_{i}^{(t)}|X_{i})}{C_{\pi_{\text{sub}}}}}.
%     \end{align*}
% \end{prop}

\section{Simulation} \label{simulation}
In this section, we evaluate the proposed parallelizing MCMC method using simulated datasets.
First, we implemented a multivariate normal distribution case in which the posterior distribution is multivariate normal.
Second, we simulated a mixed normal distribution case by using a multimodal posterior.
We used the $\{\text{ranger}\}$ package (\cite{ranger}) in R programming language to train the random forest.

\subsection{Multivariate normal distribution case}\label{multi normal experiment}
This subsection shows that the proposed upper bound of KL works by sampling from a multivariate normal posterior.
\subsubsection{Simulation settings}

In this case, we sample $N$ data points from $d$-dimensional multivariate normal described as
\begin{align*}
    X \sim N(\mu, \Sigma)
\end{align*}
where $\mu$ (unknown) and $\Sigma$ (known) represent the mean vector and covariance matrix of the distribution, respectively.
This dataset is then divided into $m$ subsets to implement the proposed parallelizing MCMC method.
We set the prior distributions of the parameters to be uniform:
\begin{align*}
    \pi(\mu) \propto \text{constant}.
\end{align*}

After data partitioning, we compute the full-posterior distribution using the undivided dataset and $m$ sub-posterior distributions using the $m$ divided subsets.
Given $m$ data partition $X = \{X_{1}, X_{2}, \ldots, X_{m}\}$ and the prior distributions, the full-posterior distribution is derived as follows:
\begin{align*}
    \pi(\mu | X) &\sim N\left(\mu_{\text{post}}, \Sigma_{\text{post}}\right),
\end{align*}
where $\mu_{\text{post}} = \sum_{x \in X} x / N$ and $\Sigma_{\text{post}} = \Sigma / N$.
The sub-posterior distributions for \(i = 1, 2, \ldots, m\) are calculated as:
\begin{align*}
    \pi(\mu | X_i) &\sim N\left(\mu_{\text{post}, i}, \Sigma_{\text{post}, i} \right),
\end{align*}
where $N_{i}$ is the number of data in $X_{i}$, $\mu_{\text{post}, i} = \sum_{x_{i} \in X_{i}} x_{i} / N_{i}$, and $\Sigma_{\text{post}, i} = \Sigma / N_{i}$.

After implementing the proposed parallelizing MCMC method, we calculate the distance between the full-posterior and the approximated posterior using the Kullback-Leibler divergence (KL) as true KL and the proposed upper bound of KL.
Let $\theta_{\text{full}}^{(t)}$ and  $\theta_{\text{approx}}^{(t)}$ be $t$-th full-posterior draw and approximated posterior draw, respectively, by the proposed parallelizing MCMC method.
In addition, let $\hat{\mu}_{\text{full}}$ and $\hat{\mu}_{\text{approx}}$ be the posterior means, and $\hat{\Sigma}_{\text{full}}$ and $\hat{\Sigma}_{\text{approx}}$ be the posterior covariance matrices.
The sample-based KL between the normal distributions can be calculated as follows:
% \begin{align}
% \mbox{KL}(\pi|f)&=\frac{1}{2}\left(\mbox{tr}(\hat{\Sigma}_{full}^{-1}\hat{\Sigma}_{approx})+(\hat{\mu}_{full}-\hat{\mu}_{approx})^{'}\hat{\Sigma}_{full}^{-1}(\hat{\mu}_{\mbox{\scriptsize full}}-\hat{\mu}_{\mbox{\scriptsize approx}}) \right. \nonumber \\
% & \qquad \left. -d-\log\frac{|\hat{\Sigma}_{approx}|}{|\hat{\Sigma}_{full}|}\right),
% \end{align}
\begin{align}
\mbox{KL}(\pi||f)&=\frac{1}{2}\left(\mbox{tr}(\hat{\Sigma}_{\text{approx}}^{-1}\hat{\Sigma}_{\text{full}})+(\hat{\mu}_{\text{approx}}-\hat{\mu}_{\text{full}})^{\top}\hat{\Sigma}_{\text{approx}}^{-1}(\hat{\mu}_{\mbox{\scriptsize approx}}-\hat{\mu}_{\mbox{\scriptsize full}}) \right. \nonumber \\
& \qquad \left. -d + \log\frac{|\hat{\Sigma}_{\text{approx}}|}{|\hat{\Sigma}_{\text{full}}|}\right),
\end{align}
where $\pi$ and $f$ are the full- and approximated posterior distributions, respectively.

To implement this experiment, we set $m=5$, $d = 10$, and $\mu = \boldsymbol{0}$ which is a $d$-dimensional zero vector.
We also set the $(g,l)$ element of $\Sigma$ to $0.8 \times |g-l|$.
To check the relationship between the true KL and the proposed upper bound of KL, we train random forest 50 times following the hyperparameters and their candidate values:
\begin{itemize}
    \item Friction: 0.8 to 0.999 in steps of 0.001. This parameter controls the subsampling fraction of the data used in each tree.
    \item Number of Trees: 10 to 100 in steps of one. This parameter specifies the total number of trees that grow in the forest.
    \item Minimum Node Size: 5 to 50 in steps of one. This parameter sets the minimum size of terminal nodes.
    \item \texttt{mtry}: 1 to the number of dimensions ($d=10$ in this case) in steps of one. This parameter determines the number of variables required to grow the tree.
    \item Replacement: TRUE or FALSE. This parameter specifies whether the data sampling should be performed with replacement.
    \item Weight Adjustment Method: TRUE or FALSE. This parameter indicates the method used to adjust the weights of observations. Weights based on the log sub-posterior density value are used.
\end{itemize}
We randomly selected the hyperparameters in each iteration from the sets above, trained the model, and assessed the correlation between the true KL and the proposed upper bound of KL.

% When sampling from the approximated posterior, we implemented data augmentation.
% After training a random forest, we augmented the sub-posterior draws $\theta_{i}^{(t)}$'s by multiplying them by random numbers generated from $\text{Uniform}(1/3, 3)$ and subjected them to sampling from the approximated posterior to reduce duplication of samples in resampling

When sampling from the approximated posterior, we implemented data augmentation to avoid obtaining duplicate draws by resampling from the original sub-posterior draws used in training.
Specifically, after training a random forest, we created augmented draws by multiplying the sub-posterior draws $\theta_{i}^{(t)}$ by random numbers generated from $\text{Uniform}(1/3, 3)$.
This augmentation introduces various candidate draws while maintaining the general characteristics of the approximated posterior distribution.
The trained classifier then evaluates these augmented draws to construct an approximated posterior distribution (\ref{eq:approx_posterior}).
If an augmented draw falls outside the reconstructable area, it receives a low weight and, thus, is rarely resampled.
Note that this augmentation does not affect the calculation of the upper bound of KL, because it is only used in the final resampling stage.

\subsubsection{Result}

Figure \ref{fig:multivariate_normal_result} shows how the proposed upper bound of KL works.
The proposed upper bound of KL correlates well with the true KL between the full-posterior distribution and the proposed approximated posterior distribution.
Thus, we can use the proposed KL criterion to select machine learning classifiers to calculate the probability of belonging to the reconstructable area.
Figure \ref{fig:multivariate_scatter_matrix} shows that the proposed method effectively captures the reconstructable area.
However, the tails of the approximated posterior distribution spread because of the data augmentation to avoid duplicate resampling.
% In table \ref{tab:parameter_settings}, the No.10 point shows the best performance, and the parameter settings are fraction is 0.999, num\_tree is 50, and mtry is 10.
% The meaning of fractions is how many percentages of the total data are used to create each decision tree, num\_tree is how many decision trees are created when constructing a random forest, and mtry is how many features are used to generate each decision tree.
% Although this setting usually occurs overfitting for data, in implementing the proposed parallelizing MCMC methods, overfitting is necessary to extract the reconstructable area precisely.

\begin{figure}[tbp]
 \begin{center}
  \includegraphics[width=120mm]{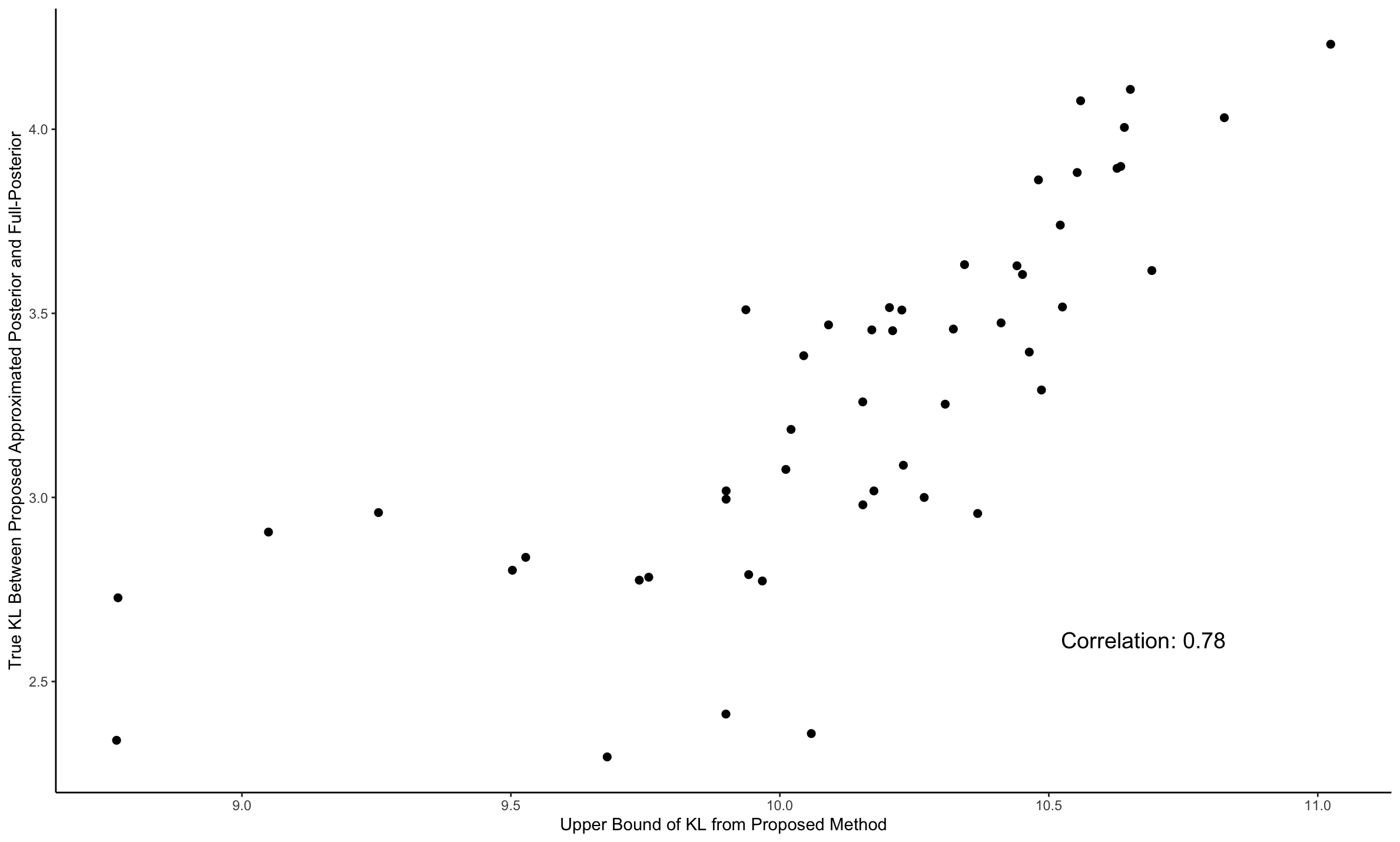}
 \end{center}
 \caption{Between true posterior and approximated posterior, the scatter plot of the proposed upper bound of KL and true KL.}
 \label{fig:multivariate_normal_result}
\end{figure}

\begin{figure}[tbp]
 \begin{center}
  \includegraphics[width=150mm]{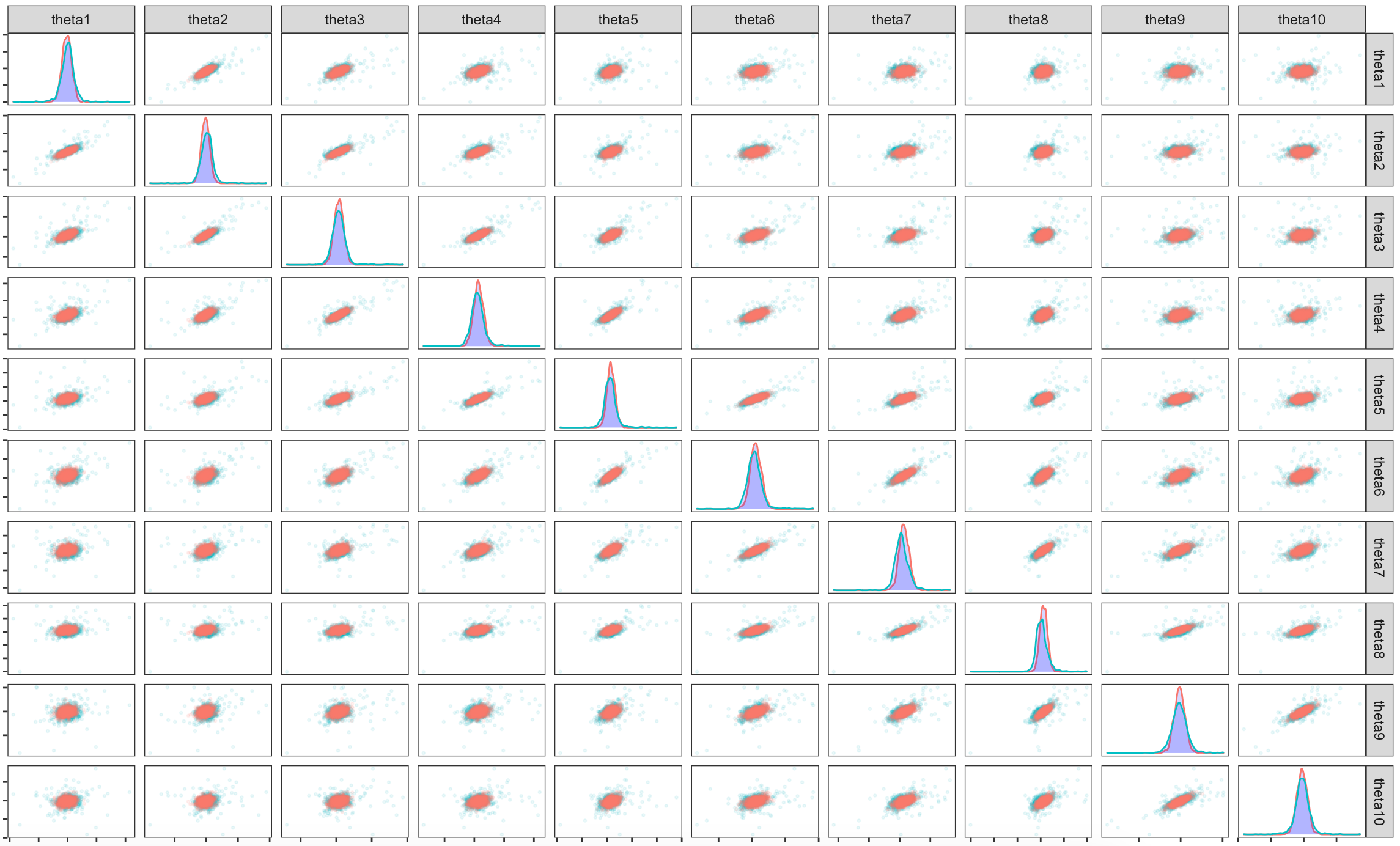}
 \end{center}
 \caption{Scatter matrix plot. The red-colored points and density show the full-posterior distribution, and the light green-colored points and density show the approximated posterior distribution calculated with the best-performed parameter settings in Figure \ref{fig:multivariate_normal_result}.}
 \label{fig:multivariate_scatter_matrix}
\end{figure}

Table \ref{tab:comparison} compares some parallelizing MCMC methods.
\cite{Scott2013}'s and \cite{Dunson2014}'s methods performed well because their methods depend on posterior normality, and the posterior distribution in this case is normally distributed.
\cite{Neiswanger2014}'s method suffered from approximating the full-posterior distribution through kernel density estimation on the term KL and combining time.
However, the proposed method in this study performed relatively well in terms of KL.
Although the combining time in our method is also relatively long, the random forest can efficiently handle large volumes of and high-dimensional data and remain a strong candidate for practice \citep{Qi2012, ranger}.

\begin{table}[h]
\centering
\begin{tabular}{lcc}
\hline \hline
\textbf{Method} & \textbf{True KL} & \textbf{Combining time (sec)} \\ \hline
\cite{Scott2013} & 0.189 & 0.067 \\ 
\cite{Neiswanger2014} & 42.758 & 65.599 \\ 
\cite{Dunson2014} & 0.012 & 1.949 \\ 
Proposed & 2.727 & 55.107 \\ \hline \hline
\end{tabular}
\caption{Comparison of Methods. In the proposed method, we used the hyperparameter setting, the smallest upper bound of KL, to train the random forest.}
\label{tab:comparison}
\end{table}

\subsection{Mixture normal case}\label{mixture_normal_case_study}
This subsection shows that the proposed parallelizing MCMC method works by sampling from a mixture of normal posterior distributions with multimodes.

\subsubsection{Simulation settings}
We adapted our methods and the preceding methods to a mixture of normal distribution cases in which the posterior is multimodal. 
Assume that we get $10,000$ data from: 
\begin{eqnarray}
X \sim \frac{1}{4}{N}(-3,1)+\frac{1}{2}{N}(0,1) + \frac{1}{4}{N}(3,1).
\label{eq:multimodalmodel}
\end{eqnarray}
Now, let us call the components on the right-hand side of (\ref{eq:multimodalmodel}) the 1st, 2nd, and 3rd components from left to right.
Suppose we also know that $X$ is sampled from which component in equation (\ref{eq:multimodalmodel}).
Let $L \in \{1, 2, 3\}$ be the label from which component $X$ is sampled.
We obtain $10,000$ samples of $D=(X, L)$.

With the obtained $D$'s, we equally divide $D$'s into five subsets, $D_{i}=(X_{i}, L_{i})$, $i=1,\ldots,5$, then we construct sub-posterior distributions with three modes as follows:
\begin{align*}
\pi(\theta|D_{i}) = \frac{1}{4}{N}\left(\theta\left|\bar{X}_{i1}, \frac{1}{N_{ i1}}\right.\right) + \frac{1}{2}{N}\left(\theta\left|\bar{X}_{i2}, \frac{1}{N_{i2}}\right.\right) + \frac{1}{4}{N}\left(\theta\left|\bar{X}_{i3}, \frac{1}{N_{i3}}\right.\right),
\end{align*}
where $\bar{X}_{ic}$, $c=1,2,3$, is the mean of $c$-th component of $X_{i}$ and $N_{ic}$ is the number of data points of $c$-th component in $i$-th data partition.
Thus, we assume that IPE is held, and the full-posterior distribution can be written as
\begin{align*}
\pi(\theta|D)\propto\prod_{i=1}^{5}\pi(\theta|D_{i}).
\end{align*}

% We used RStan (\cite{RStan}) to implement MCMC, and then we got $20,000$ MCMC draws from each sub-posterior.
When implementing the proposed parallelizing MCMC method, we use random forest to calculate the probability of belonging to the reconstructable area and determine the hyperparameters through random search in the same way as in subsection \ref{multi normal experiment}, to select the model with the minimum proposed upper bound of KL in the random search.

\subsubsection*{Result}
Measuring the distance between the full-posterior and approximated posterior distributions in a multimodal case is difficult.
Hence, we only provide their density traces in Figures \ref{fig:mixtureappliedposterior_comparison} and \ref{fig:mixtureappliedposterior}.

Figure \ref{fig:mixtureappliedposterior_comparison} compares some parallelizing MCMC methods.
As you can imagine in the provision, \cite{Scott2013}'s method merged several modes by taking a weighted average.
Both \cite{Neiswanger2014}'s and \cite{Dunson2014}'s methods over-represented only the densest central part, and this was especially the case for \cite{Dunson2014}'s method.
For visibility, we extract only the density traces of the full-posterior and approximated posterior distributions using the proposed method as shown in Figure \ref{fig:mixtureappliedposterior}.
This figure shows that the proposed method adequately captured the multimodes of the distribution.

\begin{figure}[tbp]
 \begin{center}
  \includegraphics[width=140mm]{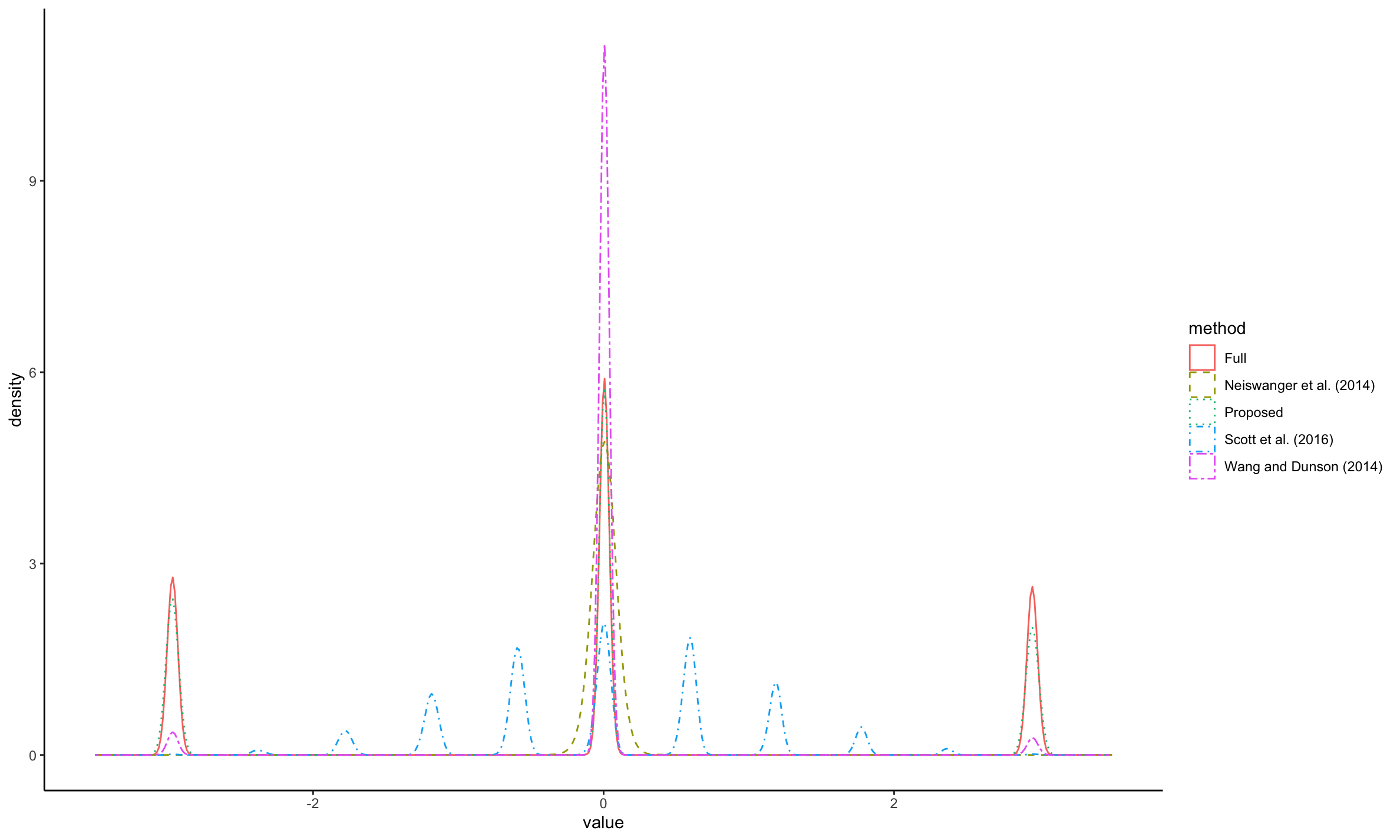}
 \end{center}
 \caption{The comparison of the density traces of the full-posterior and approximated posteriors by parallelizing MCMC methods.}
 \label{fig:mixtureappliedposterior_comparison}
\end{figure}

\begin{figure}[tbp]
 \begin{center}
  \includegraphics[width=140mm]{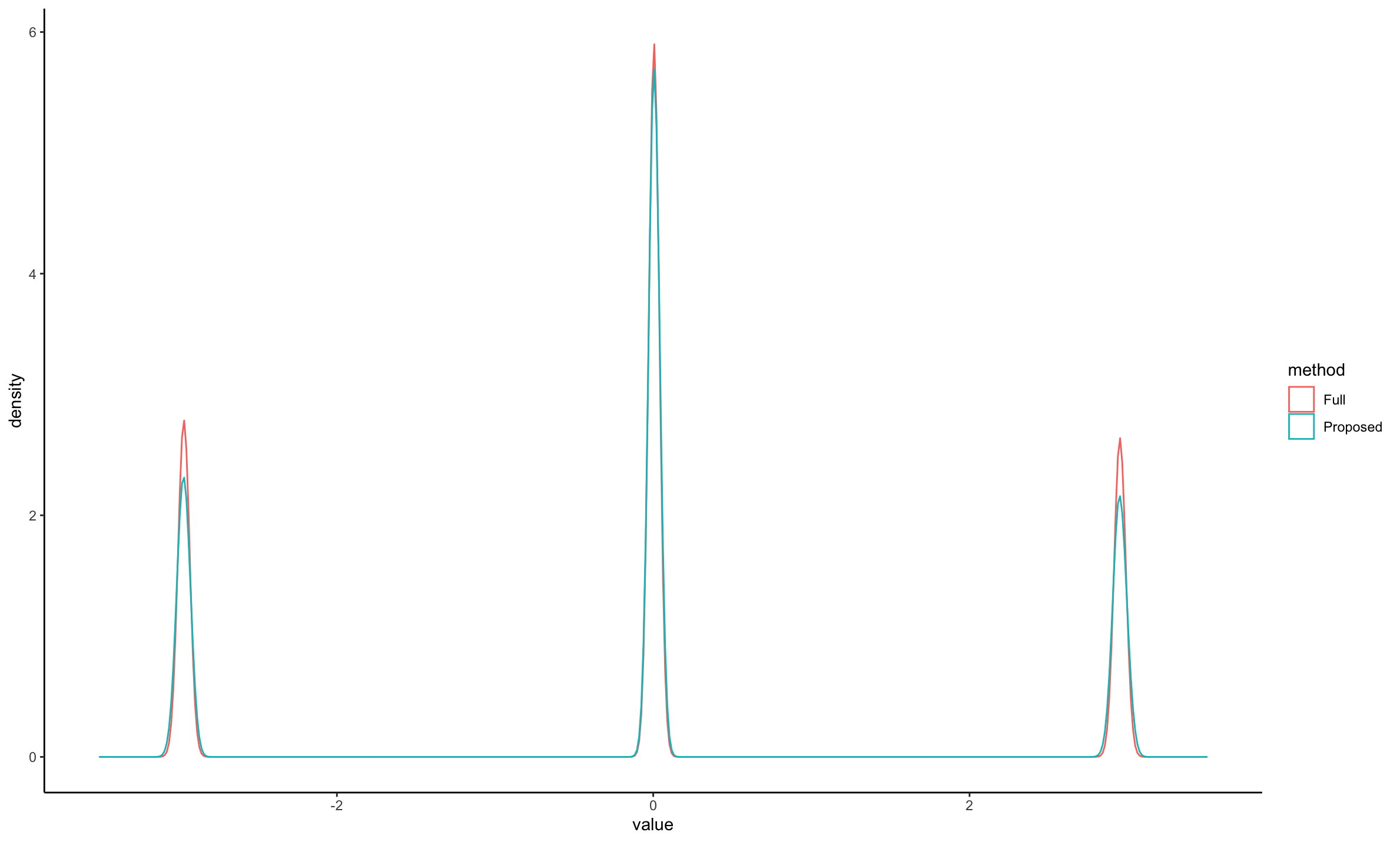}
 \end{center}
 \caption{The density traces of the full-posterior and approximated posterior by the proposed parallelizing MCMC method.}
 \label{fig:mixtureappliedposterior}
\end{figure}

\section{Conclusion} \label{conclusion}

This study addressed the computational challenges in parallelizing MCMC by focusing on how to effectively combine sub-posterior draws.
We introduced the key concept to combining these draws, the reconstructable area where sub-posterior distributions overlap.
We developed a parallelizing MCMC method using machine learning classifiers, particularly random forest, to extract posterior draws from the reconstructable area and proposed an algorithm for parallelizing MCMC.
We also developed the upper bound of KL divergence as a selection criterion for classifiers.
This criterion can be calculated using only sub-posterior MCMC information, making it particularly useful for tuning hyperparameters of classifiers.
Simulation studies demonstrated that our proposed upper bound of KL correlates well with the true KL divergence between full- and approximated posterior distributions, and confirmed the effectiveness of our method, especially for a posterior distribution with multiple modes.

% This paper elucidated the importance of the reconstructable area, where posterior distributions overlap after data partitioning in parallelizing MCMC. 
% Specifically, the reconstructable area identifies the dispersed information due to data partitioning. 
% We proposed a method for extracting posterior draws from the reconstructable area using a machine learning classifier, particularly random forest, and an algorithm for parallelizing MCMC. 
% Additionally, we introduced an upper bound of KL as a criterion for training classifiers.
% This paper's primary contribution is the development of the upper bound of KL, which quantifies the discrepancy between the full-posterior and our approximated posterior distributions.
% The KL divergence-based criterion is advantageous because it can be calculated using only sub-posterior MCMC information, and this criterion helps us tune the hyperparameters in trainning classifiers.
% Simulation studies confirmed that the upper bound of KL correlates with the true KL between full- and approximated posterior distributions, and the proposed method was effective, especially for complex posterior distributions.

% Unlike regular classification tasks, training the classifiers used in our parallelizing MCMC method requires some data overfitting.
% This is because the objective is to find the common parts of multiple $m$ posterior distributions, and there is no need to predict which parallelization of new posterior samples comes from.
% Even if the classifier overtrains, the samples belonging to the reconstructable area require indistinguishability.
We explain the two drawbacks of the proposed method.
First, efficient data augmentation methods are required to effectively sample posterior draws using the proposed method. 
Simply resampling the original sub-posterior draws used in the training results in many duplicate draws. 
Although we addressed this issue through data augmentation by applying multiplicative jittering to create new candidate draws, this approach has limitations. 
For machine learning classifiers, such data augmentation can lead to partial extrapolation, particularly affecting the tails of the approximated posterior distribution. 
By contrast, \cite{Dunson2014}'s method avoids this issue by obtaining new approximate posterior draws through Gibbs sampling. 
Although our method demonstrates the ability to reconstruct the full-posterior distribution and effectively partition the parameter space using machine learning classifiers, future research should focus on developing a theoretically consistent data augmentation method.

Second, in general, studies on parallelizing MCMC methods would need to overcome the curse of dimensionality. 
The curse of dimensionality is a frequent problem in many statistical and machine learning areas, and almost all parallelizing MCMC methods are no exception.
As the dimensionality of the model increases, the posterior draws in the parameter space would become sparsely distributed (\cite{DAMOUR2021}). 
This property makes extracting the reconstructable areas for high-dimensional models difficult and requires a reasonable number of iterations for each MCMC. 
In such a case, dividing the data and performing MCMC weakens the usefulness of parallelizing MCMC to reduce overall computational costs. 

\appendix
% \section{Proof}
% \subsection{Proof of Lemma \ref{lemma}}
% \begin{proof} \label{p and sub p inequality}
% Let $C_{\pi}^{'} = \sum_{i}\sum_{t} \prod_{j}\pi\left(\theta_{i}^{(t)}|X_{j}\right)$, and $C_{\pi}^{'} = C_{\pi}$ because of IPE.
% We can obtain a constant $H$ such that 
% \begin{align*}
%     H \times  \max_{i,t}\left\{\frac{\pi(\theta_{i}^{(t)}|X_{i})}{C_{\pi_{\text{sub}}}}\right\} \le \min_{i,t}\left\{\frac{\prod_{j}\pi(\theta_{i}^{(t)}|X_{j})}{C_{\pi}^{'}}\right\},
% \end{align*}
% because the number related to $i$ and $t$ is finite at most.
% Hence,
% \begin{align*}
%     & \sum_{i}\sum_{t} \left\{\frac{\pi\left(\theta_{i}^{(t)}|X\right)}{C_{\pi}} - \frac{\pi\left(\theta_{i}^{(t)}|X_{i}\right)}{C_{\pi_{\text{sub}}}}\right\} \log{f\left(\theta_{i}^{(t)}\right)} \\
%     &= \sum_{i}\sum_{t} \left\{\frac{\prod_{j}\pi\left(\theta_{i}^{(t)}|X_{j}\right)}{C_{\pi}^{'}} - \frac{\pi\left(\theta_{i}^{(t)}|X_{i}\right)}{C_{\pi_{\text{sub}}}}\right\} \log{f\left(\theta_{i}^{(t)}\right)} \\
%     &\ge (H - 1) \sum_{i}\sum_{t} \frac{\pi\left(\theta_{i}^{(t)}|X_{i}\right)}{C_{\pi_{\text{sub}}}} \log{f\left(\theta_{i}^{(t)}\right)}.
% \end{align*}
% Therefore,
% \begin{align*}
%     H \sum_{i}\sum_{t} \frac{\pi\left(\theta_{i}^{(t)}|X_{i}\right)}{C_{\pi_{\text{sub}}}} \log{f\left(\theta_{i}^{(t)}\right) \le \sum_{i}\sum_{t} \frac{\pi\left(\theta_{i}^{(t)}|X\right)}{C_{\pi}} \log{f\left(\theta_{i}^{(t)}\right)}}.
% \end{align*}
% \end{proof}

\section{Proof of Theorem \ref{upperbound of KL}} \label{Proof of Theorem}
\begin{proof}
Let $C_{\pi}^{'} = \sum_{i^{'}}\sum_{t^{'}} \prod_{j}\pi\left(\theta_{i^{'}}^{(t^{'})}|X_{j}\right)$, and $C_{\pi}^{'} = C_{\pi}$ because of IPE.
We can obtain a constant $H$ such that 
% \begin{align*}
%     H \times  \max_{i,t}\left\{\frac{\pi(\theta_{i}^{(t)}|X_{i})}{C_{\pi_{\text{sub}}}}\right\} \le \min_{i,t}\left\{\frac{\prod_{j}\pi(\theta_{i}^{(t)}|X_{j})}{C_{\pi}^{'}}\right\},
% \end{align*}
\begin{align*}
    \max_{i,t}\left\{\frac{\prod_{j}\pi(\theta_{i}^{(t)}|X_{j})}{C_{\pi}^{'}}\right\} \le H \times \min_{i,t}\left\{\frac{\pi(\theta_{i}^{(t)}|X_{i})}{C_{\pi_{\text{sub}}}}\right\},
\end{align*}
because the number related to $i$ and $t$ is at most finite.
Hence, note that $\log{f\left(\theta_{i}^{(t)}\right)} < 0$; then,
\begin{align*}
    & \sum_{i}\sum_{t} \left\{\frac{\pi\left(\theta_{i}^{(t)}|X\right)}{C_{\pi}} - \frac{\pi\left(\theta_{i}^{(t)}|X_{i}\right)}{C_{\pi_{\text{sub}}}}\right\} \log{f\left(\theta_{i}^{(t)}\right)} \\
    &= \sum_{i}\sum_{t} \left\{\frac{\prod_{j}\pi\left(\theta_{i}^{(t)}|X_{j}\right)}{C_{\pi}^{'}} - \frac{\pi\left(\theta_{i}^{(t)}|X_{i}\right)}{C_{\pi_{\text{sub}}}}\right\} \log{f\left(\theta_{i}^{(t)}\right)} \\
    &\ge (H - 1) \sum_{i}\sum_{t} \frac{\pi\left(\theta_{i}^{(t)}|X_{i}\right)}{C_{\pi_{\text{sub}}}} \log{f\left(\theta_{i}^{(t)}\right)}.
\end{align*}
Therefore,
\begin{align}
    H \sum_{i}\sum_{t} \frac{\pi\left(\theta_{i}^{(t)}|X_{i}\right)}{C_{\pi_{\text{sub}}}} \log{f\left(\theta_{i}^{(t)}\right) \le \sum_{i}\sum_{t} \frac{\pi\left(\theta_{i}^{(t)}|X\right)}{C_{\pi}} \log{f\left(\theta_{i}^{(t)}\right)}}. \label{eq:lemma}
\end{align}
By using (\ref{eq:lemma}), we obtain the following inequality:
\begin{align*}
    \text{KL}(f_{\pi} || f) &= \sum_{i}\sum_{t}\frac{\pi\left(\theta_{i}^{(t)}|X\right)}{C_{\pi}} \log{\frac{\pi\left(\theta_{i}^{(t)}|X\right)}{C_{\pi}}} - \sum_{i}\sum_{t}\frac{\pi\left(\theta_{i}^{(t)}|X\right)}{C_{\pi}} \log{f\left(\theta_{i}^{(t)}\right)}\\
    &\leq -\sum_{i}\sum_{t}\frac{\pi\left(\theta_{i}^{(t)}|X\right)}{C_{\pi}} \log{f\left(\theta_{i}^{(t)}\right)} \\
    &\le - H \sum_{i}\sum_{t}\frac{\pi\left(\theta_{i}^{(t)}|X_{i}\right)}{C_{\pi_{\text{sub}}}} \log{f\left(\theta_{i}^{(t)}\right)} \\
    &= H \left\{ \log{\frac{C_{f}}{m}} - \frac{1}{m} \sum_{i}\sum_{t}\frac{\pi\left(\theta_{i}^{(t)}|X_{i}\right)}{C_{\pi_{\text{sub}}}} \sum_{j=1}^{m}\log{\Pr{\left(z=j|\theta_{i}^{(t)}\right)}}\right\}.
\end{align*}
\end{proof}

% \section*{Statements and Declarations}
% \subsubsection*{Competing interests}
% The author has no competing interests to declare that are relevant to the content of this article.

% \subsubsection*{Funding}
% This work was partially supported by the Japan Society for the Promotion of Science under the Grant-in-Aid for Research Activity Start-up 24K22607.

\section*{Acknowledgement}
% This work was partially supported by the Japan Society for the Promotion of Science under the Grant-in-Aid for Research Activity Start-up 24K22607.
The author would like to thank Professor Yuichiro Kanazawa for his valuable comments.

\bibliography{ref}

\end{document}